 \newtheorem{proposition}{Proposition}
\begin{document}
\title{\bf A note on the proportionality between some consistency indices in the AHP}
\author{
{\bf Matteo Brunelli}
\\
{\normalsize  IAMSR and Turku Centre for Computer Science} \\
{\normalsize \AA bo Akademi University}, {\normalsize Joukahainengatan 3-5A,
FIN-20520 \AA bo, Finland}
\\ {\normalsize e--mail:
\texttt{matteo.brunelli@abo.fi}}
\vspace{0.3cm}\\
{\bf Andrew Critch}
\\
{\normalsize Department of Mathematics} \\
{\normalsize University of California}, {\normalsize Berkeley, CA
94720, United States}
\\ {\normalsize e--mail:
\texttt{critch@math.berkeley.edu}}
\vspace{0.3cm}\\
{\bf Michele Fedrizzi}
\\
{\normalsize  Department of Computer and Management Sciences} \\
{\normalsize University of Trento}, {\normalsize Via Inama 5, I-38122 Trento, Italy}
\\ {\normalsize e--mail:
\texttt{michele.fedrizzi@unitn.it}}
}
\date{September, 2010}

\maketitle \thispagestyle{empty}

\begin{center}
{\bf Abstract }
\end{center}

\abstract{\small \noindent Analyzing the consistency of preferences is an important step in decision making with pairwise comparison matrices, and several indices have been proposed in order to estimate it. In this paper we prove the proportionality between some consistency indices in the framework of the Analytic Hierarchy Process. Knowing such equivalences eliminates redundancy in the consideration of evidence for consistent preferences.}

 \vspace{0.3cm}
 \noindent {\small {\bf
 Keywords}:  Analytic hierarchy process; consistency indices; pairwise comparison matrices; reciprocal relations.}
 \vspace{0.3cm}


\section{Introduction}
Measuring the inconsistency of an $n \times n$ pairwise comparison matrix --- that is, assigning a numerical value to ``how much" the matrix $\mathbf{A}=(a_{ij})_{n \times n}$ deviates from one indicating consistent preferences --- is an important issue in the Analytic Hierarchy Process (AHP), as well as in other alternative methods of decision-making. 

The oldest and most commonly used measure is the consistency index, $CI$, introduced
by Saaty \cite{saaty1977},
\begin{equation}
CI=\frac{\lambda_{\max}-n}{n-1}, \label{CI}
\end{equation}

\noindent where $\lambda_{\max}$ is the maximum eigenvalue of
$\mathbf{A}$. After Saaty, several other authors proposed different
consistency indices in order to find the most suitable way to
estimate ``how far" $\mathbf{A}$ is from the consistency condition

\begin{equation}
a_{ij}a_{jk}=a_{ik} ~  \forall i,j,k.
\label{consistenza moltiplicativa}
\end{equation}

\noindent Note that Saaty's definition (\ref{CI}) is based on the
fact that, for a positive reciprocal matrix, condition
(\ref{consistenza moltiplicativa}) holds if and only if
$\lambda_{\max}=n$.

Appropriate consistency evaluation of elicited
preferences is seen as important largely because the achievement of a
satisfactory consistency level is viewed as a desirable property. The more consistent are the preferences of a decision maker,
the more likely he/she is a reliable expert, has a deep insight into
the problem, and acts with attention and care with respect to the
problem he/she is facing. Conversely, if judgements are far from
consistency, i.e. they are heavily contradictory, it is likely
that they were given with poor competence and care. Several inconsistency indices have been already proposed in literature to estimate the degree of incoherence of judgements  \cite{Barzilai1998,CavalloD'Apuzzo2009,CavalloD'Apuzzo2010,DuszakKoczkodaj1994,GoldenWang1989,Koczkodaj1993,RamikKorviny2010,RamikPerzina2010,SteinMizzi2007}


If two indices are proportional, it is important to know their proportionality for two reasons.  From an empirical point of view, they should not be considered as contributing independent evidence for the consistency of a subject's preferences.  On the other hand, from a mathematical perspective, their equivalence may be taken to suggest that they represent an important quantity.


\section{Pairwise comparison matrices and consistency indices}
Given a set of alternatives $X=\{ x_{1},\ldots,x_{n} \}~(n \geq 2)$, a
pairwise comparison matrix $\mathbf{A}=(a_{ij})_{n \times n}$ is a
matrix $\mathbf{A} \in [1/9,9]^{n \times n}$ with (i) $a_{ii}=1 \;
\forall i$ and (ii) $a_{ij}a_{ji}=1 \; \forall i,j$ where $a_{ij}$
is a multiplicative estimation of the degree of preference of $x_i$ over $x_j$
\cite{saaty1977}. The comparison scale ranging from 1 to 9 was employed by Saaty based on experimental evidence \cite{miller1956} that an individual cannot simultaneously compare more than $7\pm 2$ objects without being confused. A pairwise comparison matrix is considered {\em consistent} if
and only if the following transitivity condition holds:
\begin{equation}
\label{eq:transitivity}
a_{ik}=a_{ij}a_{jk}~  \forall i,j,k.
\end{equation}
If $\mathbf{A}$ is consistent, then there exists a vector $\mathbf{w}=(w_{1},\ldots,w_{n})$ such that
\begin{equation}
\label{eq:ratio}
a_{ij}=\frac{w_i}{w_j}~  \forall i,j.
\end{equation}
In this case, the vector $\mathbf{w}$ can be
obtained using the geometric mean method:
\begin{equation}
\label{eq:mediageometrica}
w_{i}=\left( \prod_{j=1}^{n}a_{ij} \right)^{\frac{1}{n}}~  \forall i .
\end{equation}

Some other types of matrices have been proposed  in order to
pairwise compare alternatives, and perhaps the second best known
approach, after that of Saaty, is based on {\em reciprocal relations}
\cite{tanino}. Reciprocal relations, which are sometimes also called fuzzy preference relations, can be represented by means of
matrices $\mathbf{R}=(r_{ij})_{n \times n}$ with (i) $r_{ii}=0.5
\; \forall i$ and (ii) $r_{ij}+ r_{ji}=1 \; \forall i,j$ where
$r_{ij}$ is an estimation of the degree of preference given to $x_i$ compared with $x_j$.  Tanino calls a reciprocal relation matrix {\em additively consistent}
if
\begin{equation}
\label{transR}
r_{ij}-r_{ik}-r_{kj}+0.5 = 0 ~ \forall i,j,k.
\end{equation}

Pairwise comparison matrices and reciprocal relations are
theoretically interchangeable representations of preferences, relatable by means of a function $f:[1/9,9]\rightarrow
[0,1]$ defined in \cite{fedrizzi} as follows
\begin{equation}
\label{eq:f}
r_{ij}=f(a_{ij})=\frac{1}{2}(1+\log_{9}a_{ij}),
\end{equation}
and its inverse
\begin{equation}
\label{eq:g}
a_{ij}=f^{-1}(r_{ij})=9^{2 (r_{ij}-0.5)} \, .
\end{equation}
Under this transformation, given $\mathbf{A}=(a_{ij})$ and
$\mathbf{R}=(r_{ij})$, if $r_{ij}=f(a_{ij})~\forall i,j$, then
$\mathbf{A}=(a_{ij})$ and $\mathbf{R}=(r_{ij})$ can be considered to represent the same preference configuration.

Besides Saaty's consistency index (\ref{CI}), several other
consistency indices have been proposed in the literature so far, and in
this short paper we establish the proportionality between two pairs of them. Hence, let us first briefly
recall the definitions of the four consistency indices at issue.

\subsection{The Geometric Consistency Index}
\label{sub:Geometric consistency index} The geometric consistency
index \cite{GCI,crawford85} is based on the deviations of the
entries $a_{ij}$ of $\mathbf{A}$ from the consistent values
$w_i / w_j$, where the weight vector
$\mathbf{w}=(w_1,...,w_n)$ is given by (\ref{eq:mediageometrica}).  It has the following formulation:
\begin{equation}
\label{eq:GCI}
GCI=\frac{2}{(n-1)(n-2)}\sum_{i=1}^{n}\sum_{j>i}^{n} \, \ln^2{e_{ij}}
\end{equation}
with $e_{ij}:=a_{ij} (w_j / w_i )$ being a local estimator of
inconsistency and $\frac{2}{(n-1)(n-2)}$ a normalization factor.

\subsection{The index of Lamata and Pel\'aez}
\label{sub:Index of determinants}
The index of Lamata and Pel\'aez \cite{lamatapelaez,pelaezlamata}, denoted by $CI^{*}$, is based on the property that three alternatives $x_i,x_j,x_k$ are pairwise
compared in a consistent way if and only if the determinant of the
corresponding pairwise comparison matrix of order three

\begin{equation}
\mathbf{A}_{3 \times 3} = \left(%
\begin{array}{ccc}
  1 & a_{ij} & a_{ik} \\
  \frac{1}{a_{ij}} & 1 & a_{jk} \\
  \frac{1}{a_{ik}} & \frac{1}{a_{jk}} & 1 \\
\end{array}%
\right)
\label{matrice3X3}
\end{equation}

\noindent is equal to zero,

\begin{equation}
\label{det} \det (\mathbf{A}_{3 \times
3})=\frac{a_{ik}}{a_{ij}a_{jk}}+\frac{a_{ij}a_{jk}}{a_{ik}}-2 = 0.
\end{equation}
\noindent Based on this property, the authors define the
consistency index $CI^{*}$ of an $n \times n$ pairwise comparison
matrix $\mathbf{A}$ as the mean value of the determinants of all
the $3 \times 3$ submatrices of $\mathbf{A}$.

\subsection{The index $c_3$}
\label{sub:The index $c_3$}
Shiraishi et al. \cite{giapponesi1,giapponesi2,giapponesi3} proposed, as a consistency index of a pairwise comparison matrix, the coefficient $c_3$ of its characteristic polynomial.
\begin{displaymath}
P_{\mathbf{A}}(\lambda)=\lambda^n +c_{1}\lambda^{n-1}+\cdots
+c_{n-1}\lambda+c_{n} \; .
\end{displaymath}
\noindent They proved \cite{giapponesi1} that $c_3(\mathbf{A})\leq 0$ for
every pairwise comparison matrix $\mathbf{A}$, with $c_3(\mathbf{A}) = 0$ if and only if $\mathbf{A}$ is consistent, which justifies its use as a consistency index.

\subsection{The index $\rho$}
\label{sub:The index rho} The index $\rho$ for reciprocal
relations \cite{FedFedMarPer2002,fedrizzigiove} is based on an
index of local consistency associated with the triplet
$(x_i,x_j,x_k)$, that is
\begin{equation}
\label{eq:L} t_{ijk}^2=(r_{ij}-r_{ik}-r_{kj}+0.5)^2.
\end{equation}
which clearly derives from (\ref{transR}). Fedrizzi and Giove \cite{fedrizzigiove} defined a global consistency index as
the mean value of the local consistency indices for all the
possible triplets $(x_i,x_j,x_k)$, obtaining
\begin{equation}
\label{eq:rho}
\rho=\sum_{i<j<k}^{n} (r_{ij}-r_{ik}-r_{kj}+0.5)^2
\bigg/ \binom{n}{3}.
\end{equation}

\section{Results}
In this section we prove that the index $c_3$ is proportional to
$CI^{*}$, and the index $\rho$ is proportional to $GCI$.

\begin{proposition}
\label{proposizione1}
Given a positive reciprocal
matrix $\mathbf{A}_{n \times n}$ with $n \geq 3$, the consistency indices $c_3$
and $CI^*$ satisfy the equality
\begin{equation}
c_3 = - \binom{n}{3} CI^*.
\label{prima equivalenza}
\end{equation}
\end{proposition}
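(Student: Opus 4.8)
The plan is to compute both sides of \eqref{prima equivalenza} by expressing them in terms of the same elementary symmetric quantity, namely the sum over all triples $\{i,j,k\}$ of the determinant of the corresponding $3\times 3$ principal submatrix. On the right-hand side this is immediate from the definition of $CI^*$: since $CI^*$ is the \emph{mean} of the determinants of all $\binom{n}{3}$ principal $3\times 3$ submatrices, we have $\binom{n}{3}\,CI^* = \sum_{i<j<k}\det(\mathbf{A}_{ijk})$, where $\mathbf{A}_{ijk}$ is the submatrix in \eqref{matrice3X3}. So it suffices to show $c_3 = -\sum_{i<j<k}\det(\mathbf{A}_{ijk})$.

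For the left-hand side I would invoke the standard fact that the coefficients of the characteristic polynomial of a matrix are, up to sign, the sums of its principal minors: concretely, $(-1)^k c_k$ is the sum of all $k\times k$ principal minors of $\mathbf{A}$. Applying this with $k=3$ gives $-c_3 = \sum_{i<j<k}\det(\mathbf{A}_{ijk})$, which is exactly the identity needed. Here one uses that the principal $3\times 3$ submatrices of a reciprocal matrix $\mathbf{A}$ are precisely the matrices of the form \eqref{matrice3X3}, so the two sums range over the same objects.

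At that point the two computations meet: $c_3 = -\sum_{i<j<k}\det(\mathbf{A}_{ijk}) = -\binom{n}{3}CI^*$, which is \eqref{prima equivalenza}. As a sanity check one can note that, by \eqref{det}, each $\det(\mathbf{A}_{ijk}) = \tfrac{a_{ik}}{a_{ij}a_{jk}} + \tfrac{a_{ij}a_{jk}}{a_{ik}} - 2 \ge 0$ by the AM--GM inequality (with equality iff that triple is consistent), so $c_3 \le 0$ with equality iff $\mathbf{A}$ is consistent, recovering the Shiraishi et al.\ property quoted in the excerpt.

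The only real content is the principal-minor formula for the coefficients of the characteristic polynomial, so the main ``obstacle'' is simply deciding how much of that to prove versus cite; I would state it as a known fact and give at most a one-line reminder (expanding $\det(\lambda I - \mathbf{A})$ and collecting the coefficient of $\lambda^{n-3}$). Everything else is bookkeeping about which submatrices appear, and the reciprocity conditions $a_{ii}=1$, $a_{ij}a_{ji}=1$ only enter to identify each principal $3\times 3$ submatrix with the template \eqref{matrice3X3}.
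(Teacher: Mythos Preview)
Your proposal is correct and matches the paper's own proof almost exactly: the paper also reduces both sides to the triple sum $\sum_{i<j<k}\bigl(\tfrac{a_{ik}}{a_{ij}a_{jk}}+\tfrac{a_{ij}a_{jk}}{a_{ik}}-2\bigr)$ and then compares. The only cosmetic difference is that the paper obtains the expression for $c_3$ by citing the explicit expansion of $P_{\mathbf{A}}(\lambda)$ from Shiraishi et al., whereas you invoke the general principal-minor formula $(-1)^k c_k=\sum(\text{$k\times k$ principal minors})$; these are the same identity, so there is no substantive divergence.
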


\begin{proof}

\noindent Consistency index $CI^*$ is the mean value of the
determinants of all the $3 \times 3$ submatrices
(\ref{matrice3X3}) of $\mathbf{A}$, and therefore,

\begin{equation} CI^{*}=\sum_{i=1}^n \sum_{j>i}^n
\sum_{k>j}^n
\bigg(\frac{a_{ik}}{a_{ij}a_{jk}}+\frac{a_{ij}a_{jk}}{a_{ik}}-2\bigg)\bigg/
\binom{n}{3}.
\label{svil_det_3X3}
\end{equation}

\noindent Furthermore, since $\mathbf{A}$ is positive and
reciprocal, by expanding $P_{\mathbf{A}}(\lambda)$ (see \cite{giapponesi1}) one obtains
\begin{equation}
c_3=\sum_{i=1}^n \sum_{j>i}^n
\sum_{k>j}^n\bigg(2-\frac{a_{ik}}{a_{ij}a_{jk}}-\frac{a_{ij}a_{jk}}{a_{ik}}
 \bigg) .
\label{svilc3}
\end{equation}
\noindent Then, equality (\ref{prima equivalenza}) follows from
(\ref{svil_det_3X3}) and (\ref{svilc3}).
\end{proof}

If in this case the similarity between the two indices was quite
clear, then the same cannot be said about the next two. For this
reason, if the previous proof was straightforward, the next
involves more computations.

\begin{proposition}
\label{proposizione2}
Given a reciprocal relation
$\mathbf{R}=(r_{ij})_{n \times n}$ , the consistency indices
$\rho$ and $GCI$ satisfy the equality
\begin{equation}
\rho = \frac{3}{4\ln^2(9)}GCI
\label{seconda equivalenza}
\end{equation}
\noindent for every  $n \geq 3$
\end{proposition}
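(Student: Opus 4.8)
The plan is to express both $\rho$ and $GCI$ in terms of the same quantities and compare them term by term. First I would translate the reciprocal-relation data into multiplicative data via the correspondence $a_{ij}=f^{-1}(r_{ij})=9^{2(r_{ij}-0.5)}$, equivalently $r_{ij}-0.5=\tfrac{1}{2}\log_9 a_{ij}=\tfrac{\ln a_{ij}}{2\ln 9}$. Under this substitution the local term of $\rho$ becomes
\begin{equation}
r_{ij}-r_{ik}-r_{kj}+0.5=(r_{ij}-0.5)-(r_{ik}-0.5)-(r_{kj}-0.5)=\frac{1}{2\ln 9}\bigl(\ln a_{ij}-\ln a_{ik}-\ln a_{kj}\bigr)=\frac{1}{2\ln 9}\ln\frac{a_{ij}a_{jk}}{a_{ik}},
\end{equation}
using $-\ln a_{kj}=\ln a_{jk}$. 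So $t_{ijk}^2=\tfrac{1}{4\ln^2 9}\ln^2 d_{ijk}$ where $d_{ijk}:=a_{ij}a_{jk}/a_{ik}$, and hence $\rho=\tfrac{1}{4\ln^2 9}\binom{n}{3}^{-1}\sum_{i<j<k}\ln^2 d_{ijk}$. The task then reduces to proving the purely multiplicative identity
\begin{equation}
\frac{1}{\binom{n}{3}}\sum_{i<j<k}\ln^2\frac{a_{ij}a_{jk}}{a_{ik}}=\frac{n-2}{3}\cdot GCI=\frac{2}{3(n-1)}\sum_{i<j}\ln^2 e_{ij},
\end{equation}
since $\tfrac{1}{4\ln^2 9}\cdot\tfrac{n-2}{3}=\tfrac{3}{4\ln^2 9}\cdot\tfrac{2}{(n-1)(n-2)}\cdot\tfrac{(n-1)(n-2)}{6}$ — more precisely, I expect to verify that combining this identity with the definition of $GCI$ yields exactly $\rho=\tfrac{3}{4\ln^2 9}GCI$.

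The second and harder step is that multiplicative identity. The key reduction is to write $\ln e_{ij}=\ln a_{ij}-\ln w_i+\ln w_j$ with $\ln w_i=\tfrac{1}{n}\sum_\ell \ln a_{i\ell}$ from the geometric mean formula \eqref{eq:mediageometrica}, and similarly note $\ln d_{ijk}=\ln e_{ij}+\ln e_{jk}-\ln e_{ik}$ (the $w$-terms telescope, so $d_{ijk}$ depends on the $a$'s and on the $e$'s in the same way). Let me set $x_{ij}:=\ln e_{ij}$; then $x_{ij}+x_{ji}=0$, and crucially $\sum_{j} x_{ij}=\sum_j(\ln a_{ij}-\ln w_i+\ln w_j)=n\ln w_i-n\ln w_i+\sum_j\ln w_j=\sum_j\ln w_j$, which is independent of $i$; combined with reciprocity $\sum_i\sum_j x_{ij}=0$ this forces $\sum_j x_{ij}=0$ for every $i$. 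That row-sum-zero property is the engine. Expanding $\ln^2 d_{ijk}=(x_{ij}+x_{jk}-x_{ik})^2=x_{ij}^2+x_{jk}^2+x_{ik}^2+2x_{ij}x_{jk}-2x_{ij}x_{ik}-2x_{jk}x_{ik}$ and summing over all triples $i<j<k$, the three square terms each contribute $(n-2)\sum_{i<j}x_{ij}^2$ (each ordered pair lies in $n-2$ triples), while the cross terms sum to something that vanishes: using $x_{kj}=-x_{jk}$ etc. one rewrites the cross-term sum as $-\sum_{\{i,j,k\}}(x_{ij}x_{jk}+x_{jk}x_{ki}+x_{ki}x_{ij})$ over unordered triples, and $2\sum_{\text{triples}}(x_{ij}x_{jk}+\cdots)=\sum_i\bigl(\sum_{j\ne i}x_{ij}\bigr)^2-\sum_i\sum_{j\ne i}x_{ij}^2=0-2\sum_{i<j}x_{ij}^2\cdot(\text{care with indices})$, where the first bracket vanishes by the row-sum-zero property. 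Bookkeeping the exact constants here is the main obstacle — one must be scrupulous about which index orderings are summed and how many times each pair or ordered pair is counted — but after dust settles $\sum_{i<j<k}\ln^2 d_{ijk}=(n-2)\sum_{i<j}x_{ij}^2$, which is precisely the identity needed.

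Putting the pieces together: $\rho=\tfrac{1}{4\ln^2 9}\binom{n}{3}^{-1}(n-2)\sum_{i<j}\ln^2 e_{ij}$, and since $\binom{n}{3}^{-1}(n-2)=\tfrac{6}{n(n-1)(n-2)}(n-2)=\tfrac{6}{n(n-1)}$ while $GCI=\tfrac{2}{(n-1)(n-2)}\sum_{i<j}\ln^2 e_{ij}$, I would finish by checking the ratio $\rho/GCI=\tfrac{1}{4\ln^2 9}\cdot\tfrac{6}{n(n-1)}\cdot\tfrac{(n-1)(n-2)}{2}$ — hmm, this does not simplify to a constant, so I must recount: the normalization in $GCI$ has denominator $(n-1)(n-2)$ with a factor $2$, whereas $\binom{n}{3}=\tfrac{n(n-1)(n-2)}{6}$, and the correct cancellation requires the summation identity to produce a factor proportional to $\binom{n}{2}/\binom{n}{3}\sim 1/(n-2)$ matched against $GCI$'s $1/(n-1)(n-2)$; I anticipate that the honest computation of how many triples contain a given pair, together with the $\binom{n}{3}$ versus $\binom{n-1}{2}$ counting, makes the $n$-dependence cancel exactly and leaves the clean constant $\tfrac{3}{4\ln^2 9}$. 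The one genuine subtlety to flag is that $GCI$ is only defined for $n\ge 3$ (its normalization blows up otherwise) and that the $w_i$ used in $GCI$ are the geometric-mean weights, so the row-sum-zero identity $\sum_j \ln e_{ij}=0$ — which does not hold for an arbitrary weight vector — is what makes the whole argument go through; I would state this lemma explicitly before the main computation.
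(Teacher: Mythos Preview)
Your overall strategy is correct and is in fact a bit more economical than the paper's. The paper goes the other way: it expresses $\ln e_{ij}$ as an average of the cycle quantities $t_{ijk}$ (via $n\log_9 e_{ij}=2\sum_k t_{ijk}$) and then reduces the claim to comparing $\sum_{ijk}t_{ijk}^2$ with $\sum_{ij}\bigl(\sum_k t_{ijk}\bigr)^2$, evaluating both in terms of the auxiliary sums $S=\sum_{ij}q_{ij}^2$ and $C=\sum_{ijk}q_{ij}q_{jk}$ (with $q_{ij}=r_{ij}-\tfrac12$). You invert this: you write $\ln d_{ijk}=x_{ij}+x_{jk}+x_{ki}$ with $x_{ij}=\ln e_{ij}$ and exploit the row-sum-zero property $\sum_j x_{ij}=0$ coming from the geometric-mean weights. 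That is a legitimate and clean route.

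The concrete gap is in your cross-term count, and it is exactly what produces the spurious $n$-dependence you noticed at the end. The cross terms do \emph{not} vanish. From $\sum_j x_{ij}=0$ one gets $\sum_{i,j,k} x_{ij}x_{jk}=0$ over \emph{all} ordered triples, but this sum includes the terms with $i=k\neq j$, which contribute $\sum_{i\neq j}x_{ij}x_{ji}=-\sum_{i\neq j}x_{ij}^2=-2\sum_{i<j}x_{ij}^2$. Hence over ordered \emph{distinct} triples $\sum x_{ij}x_{jk}=2\sum_{i<j}x_{ij}^2$, and therefore
\[
2\sum_{i<j<k}\bigl(x_{ij}x_{jk}+x_{jk}x_{ki}+x_{ki}x_{ij}\bigr)=2\sum_{i<j}x_{ij}^2,
\]
so that
\[
\sum_{i<j<k}\ln^2 d_{ijk}=(n-2)\sum_{i<j}x_{ij}^2+2\sum_{i<j}x_{ij}^2=n\sum_{i<j}x_{ij}^2,
\]
with coefficient $n$, not $n-2$. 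Plugging this back in,
\[
\rho=\frac{1}{4\ln^2 9}\cdot\frac{n}{\binom{n}{3}}\sum_{i<j}x_{ij}^2
=\frac{1}{4\ln^2 9}\cdot\frac{6}{(n-1)(n-2)}\sum_{i<j}x_{ij}^2
=\frac{3}{4\ln^2 9}\cdot\frac{2}{(n-1)(n-2)}\sum_{i<j}\ln^2 e_{ij}
=\frac{3}{4\ln^2 9}\,GCI,
\]
and the $n$-dependence disappears exactly as you anticipated. Your observation that the row-sum identity $\sum_j\ln e_{ij}=0$ is specific to the geometric-mean weights, and should be isolated as a lemma, is well taken; it is indeed the hinge of the argument.
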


\begin{proof}
For later convenience, letting $q_{ij}=r_{ij}-0.5$ allows us to
write $r_{ij}+r_{ji}=1$ property as $q_{ij}=-q_{ji}$.  Then,
(\ref{eq:g}) becomes $a_{ij}=9^{2q_{ij}}$. Now, write
$t_{ijk}=r_{ij}-r_{ik}-r_{kj}+0.5 = q_{ij}+q_{jk}+q_{ki}$ so that,
from (\ref{eq:rho}), the index $\rho$ can be reformulated (see \cite{fedrizzigiove}) as 

\begin{eqnarray*}
\rho &=& \sum_{ijk}^{n} (r_{ij}-r_{ik}-r_{kj}+0.5)^2
\bigg/ 6 \binom{n}{3} \\
 &=& \sum_{ijk}t^2_{ijk} \bigg/ 6 \binom{n}{3} \; .
\end{eqnarray*}

\noindent Let us rewrite the Geometric Consistency Index
(\ref{eq:GCI}) for reciprocal relations by applying (\ref{eq:f}).
From (\ref{eq:mediageometrica}),
$$\log_{9} w_i = \frac{2}{n} \sum_k q_{ik}$$
and thus, from the definition of local inconsistency $e_{ij} :=
a_{ij}\frac{w_j}{w_i}$ in (\ref{eq:GCI}),
\begin{eqnarray*}
n\log_{9}(e_{ij}) &=& 2nq_{ij}+2\sum_k (q_{jk}-q_{ik}) \\
&=& 2\sum_k (q_{ij}+q_{jk} + q_{ki}) \\
&=& 2 \sum_k t_{ijk}
\end{eqnarray*}
so the Geometric Consistency Index equals

\begin{eqnarray*}
GCI &=& \frac{2}{(n-1)(n-2)}\sum_i\sum_{j>i}\ln^2e_{ij}\\
&=& \frac{1}{(n-1)(n-2)}\sum_{ij}\ln^2e_{ij}\\
&=& \frac{\ln^2(9)}{(n-1)(n-2)}\sum_{ij}\left(\frac{2}{n} \sum_k t_{ijk}\right)^2\\
&=& \frac{4\ln^2(9)}{n^2(n-1)(n-2)} \sum_{ij}\left(\sum_k
t_{ijk}\right)^{2}
\end{eqnarray*}

At this point, the proportionality claim $\rho \propto GCI$ is equivalent to
$$\sum_{ijk}t^2_{ijk} \quad \propto \quad \sum_{ij}\left(\sum_k t_{ijk}\right)^2$$
(where the constant of proportionality could depend on $n$).

First, let us compute the LHS:
\begin{eqnarray*}
t_{ijk}^2 &= q_{ij}^2+q_{jk}^2+q_{ki}^2
+ 2(q_{ij}q_{jk}+q_{jk}q_{ki}+q_{ki}q_{ij})
\end{eqnarray*}
Let $S=\displaystyle\sum_{ij}q_{ij}^2$
and $C=\displaystyle\sum_{ijk}q_{ij}q_{jk}$.  Summing the expansion of $t_{ijk}^2$ one term at a time,
$$\sum_{ijk}q_{ij}^2=\sum_k\sum_{ij}q_{ij}^2=nS$$
and by symmetry,
$$\sum_{ijk}q_{jk}^2=\sum_{ijk}q_{ki}^2=nS.$$
Similarly,
$$\sum_{ijk}q_{ij}q_{jk}=\sum_{ijk}q_{jk}q_{ki}=\sum_{ijk}q_{ki}q_{ij}=C.$$
Hence,
$$\text{LHS}=\sum_{ijk}t^2_{ijk}=nS+nS+nS+2(C+C+C)=3(nS+2C).$$

Next let us compute the RHS, first by rewriting:

$$\text{RHS} = \sum_{ij}\left(\sum_k t_{ijk}\right)^2 = \sum_{ij}\left(\sum_{kl} t_{ijk}t_{ijl}\right) = \sum_{ijkl}t_{ijk}t_{ijl}$$

\begin{eqnarray*}
t_{ijk}t_{ijl} &=& (q_{ij}+q_{jk}+q_{ki})(q_{ij}+q_{jl}+q_{li})\\
\mbox{ \ } &=& q^2_{ij}+q_{ij}q_{jl}+q_{ij}q_{li}
+q_{jk}q_{ij}+q_{jk}q_{jl}+q_{jk}q_{li}
+q_{ki}q_{ij}+q_{ki}q_{jl}+q_{ki}q_{li}  \\
\end{eqnarray*}
The 1st term sums to
$$\sum_{ijkl}q_{ij}^2=\sum_{kl}\sum_{ij}q_{ij}^2=n^2S.$$
The 2nd term sums to
$$\sum_{ijkl}q_{ij}q_{jl}=\sum_k\sum_{ijl}q_{ij}q_{jl}=nC.$$
Similarly, the 3rd, 4th, and 7th terms respectively sum to
$$\sum_{ijkl}q_{li}q_{ij}
=\sum_{ijkl}q_{ij}q_{jk}
=\sum_{ijkl}q_{ki}q_{ij}=nC,$$
whereas the 5th and 9th terms each sum to
$$\sum_{ijkl}-q_{kj}q_{jl} =
\sum_{ijkl}-q_{ki}q_{il} = -nC.$$
The 6th term sums to
$$\sum_{ijkl}q_{jk}q_{li}=\left(\sum_{jk}q_{jk}\right)\left(\sum_{li}q_{li}\right)=(0)(0)=0,$$
and similarly the 8th term sums to $0$.

Hence, the total sum is
\begin{eqnarray*}
\text{RHS} &=& n^2S+nC+nC+nC-nC+0+nC+0-nC \\
&=& n^2S+2nC \\
&=& n(nS+2C)
\end{eqnarray*}
so we obtain the proportionality
$$\frac{\text{LHS}}{\text{RHS}} = \frac{3(nS+2C)}{n(nS+2C)} = \frac{3}{n},$$
and recover
\begin{eqnarray*}
\frac{\rho}{GCI} &=& \frac{\text{LHS}}{\text{RHS}}\cdot \frac{n^2(n-1)(n-2)}{4\ln^2(9)}\cdot \frac{1}{6 \binom{n}{3} } \\
&=& \frac{3n(n-1)(n-2)}{4\ln^2(9)}\cdot \frac{1}{n(n-1)(n-2)} \\
&=& \frac{3}{4\ln^2(9)}.
\end{eqnarray*}
\end{proof}

\noindent Note that the constant of proportionality between $c_3$
and $CI^*$ depends on the number $n$ of alternatives, whereas the
one between $\rho$ and $GCI$ does not. Propositions \ref{proposizione1} and \ref{proposizione2} can also
be represented graphically. We randomly generated a large number
of pairwise comparison matrices (or, equivalently, reciprocal
relations) and associated each of them with a point on the
cartesian plane having as coordinates the corresponding values of the two consistency indices involved in proposition \ref{proposizione1}. As expected, all the points lie on a straight line. The same holds for proposition \ref{proposizione2}.

\section{Conclusions}
When making use of the various indices observed and proven proportional in this paper, we believe it is important that the applied mathematician be aware of their equivalence.  This avoids redundancy in the consideration of evidence for consistent preferences, and allows existing results proven for one index to apply directly to other indices which are proportional to it.

\end{document}